\numberwithin{equation}{section}
\let\oldbibliography\thebibliography
\renewcommand{\thebibliography}[1]{%
  \oldbibliography{#1}%
  \setlength{\itemsep}{-1.2mm}%
}
\theoremstyle{plain}
\newtheorem{thm}{Theorem}[section]
\newtheorem{cor}[thm]{Corollary}
\newtheorem{lem}[thm]{Lemma}
\theoremstyle{definition}
\newtheoremstyle{myremark}
  {3pt}
  {3pt}
  {\small \rmfamily}
  {5pt}
  {\rmfamily}
  {:}
  {.5em}
  {}
\theoremstyle{myremark}
\def\R{\mathbb{R}}
\def\N{\mathbb{N}}
\def\cC{\mathcal{C}}
\def\txtD{{\textnormal{D}}}
\def\ra{\rightarrow}
\def\I{\infty}
\newcommand{\be}{\begin{equation}}
\newcommand{\ee}{\end{equation}}
\newcommand{\benn}{\begin{equation*}}
\newcommand{\eenn}{\end{equation*}}
\newcommand{\bea}{\begin{eqnarray}}
\newcommand{\eea}{\end{eqnarray}}
\newcommand{\beann}{\begin{eqnarray*}}
\newcommand{\eeann}{\end{eqnarray*}}
\newcommand{\myendex}{$\blacklozenge$\end{ex}}
\newcommand{\myendexerc}{$\lozenge$\end{exerc}}
\newcommand{\myendpexerc}{$\lozenge$\end{pexerc}}
\def\XXint#1#2#3{{\setbox0=\hbox{$#1{#2#3}{\int}$}
\vcenter{\hbox{$#2#3$}}\kern-.5\wd0}}
\begin{document}

\author{Christian Kuehn\thanks{
Department of Mathematics, Technical University 
of Munich, 85748 Garching b.~M\"unchen, Germany}}

\title{Mathematical Models for Self-Adaptive Response to Cancer Dynamics}

\maketitle

\begin{abstract}
We consider two minimal mathematical models for cancer dynamics and self-adaptation. We aim to capture the interplay between the rapid progression of cancer growth and the possibility to leverage and enhance self-adaptive defense mechanisms of an organism, e.g., motivated by immunotherapy. Yet, our two models are more abstract and generic encapsulating the essence of competition between rapid cancer growth and the speed of adaptation. First, we propose a four-dimensional ordinary differential equation model on a macroscopic level. The model has three main parameter regimes and its most important features can be studied analytically. At small external stimulation of adaptation, cancer progresses to a deadly level. At intermediate external input, a bifurcation is observed \textcolor{black}{leaving behind only a locally stable cancer-free steady state}. Unfortunately, in a large parameter regime after the bifurcation, a transient cancer density spike is observed. Only for a combination of external input and speeding up adaptation, cancer does not reach critical levels. To study the adaptation speed-up in the initial dynamics phase, we switch to a microscopic probabilistic model, which we study numerically. The microscopic model undergoes a sharp transition under variation of the self-adaptation probability. It is shown that a combination of temporal memory and rare stochastic positive adaptation events is crucial to move the sharp transition point to a desired regime. Despite their extreme simplicity, the two models already provide remarkably deep insight into the challenges one has to overcome to leverage/enhance the self-adaptation of an organism in fighting cancer.   
\end{abstract}

\section{Introduction}
\label{sec:intro}

On the one hand, cancer is a highly multi-faceted disease~\cite{JonesBaylin,LujambioLowe,Uhlenetal}. There are many causes, different forms, and progressions~\cite{BissellHines,IkushimaMiyazono}. On the other hand, all cancers follow similar macroscopic principles, i.e., the main disease mechanism is simply based around uncontrolled cell division~\cite{Weinberg1}. Regarding potential therapies, there is an analogous situation~\cite{SouhamiTobias}. For different cancer types, the currently most successful therapies often require specific treatment~\cite{Sawyers}, or even combinations of different methods~\cite{Webster1}. Yet, trying to contain the basic mechanism of uncontrolled cell division is the common principle among many therapies such as chemotherapy~\cite{ChabnerRoberts}, radiation therapy~\cite{Baskaretal} or immunotherapy~\cite{MellmanCoukosDranoff}. In this work, we are interested in developing simple multiscale mathematical models to understand, whether it would be possible to design more generic therapies for cancer treatment. We shall point out immediately that this approach should be viewed as a basic research question that cannot be practically relevant already at this early stage. 

The main idea is to turn the common view of cancer dynamics completely around. Instead of understanding very precise, sometimes even molecular, details, we simply accept that any organism is a highly complex system and cancer is a systemic disease. It is impossible to model all the details of an organism to a sufficiently detailed systemic level. The key thought is to turn an intractable complex system into an advantage. Indeed, organisms are complex systems with highly adaptive capabilities. Not only can humans adapt their knowledge and skills by learning~\cite{GallistelMatzel} but also the immune system~\cite{FarmerPackardPerelson}, hormone regulation~\cite{DuftyClobertMoeller}, as well as many other functions of a complex organism only work because they are adaptable to a range of possible situations~\cite{Holland}. As a basic assumption, we could then postulate that an untreated cancer progresses too fast and thereby evades our adaptive capabilities. Nevertheless, one may ask, whether it is possible to leverage and enhance adaptive capabilities more efficiently. \textcolor{black}{For example, this route is emphasized in immunotherapy~\cite{CouzinFrankel,FarkonaDiamandisBlasutig,Fukumuraetal,MellmanCoukosDranoff}. In fact, the adaptive capabilities of the immune system have been recognized in the context of cancer dynamics~\cite{GajewskiSchreiberFu,Lawetal}. Furthermore, it is clear that adaptive effects of cells also play a key role in cancer cells being able to evade immunotheraphy~\cite{CouzinFrankel,FarkonaDiamandisBlasutig,Fukumuraetal,MellmanCoukosDranoff}. In the context of immunotherapy, a} quite significant controlled intervention is necessary, e.g., to extract, prepare, and grow T-cells in a laboratory setting. One may hence ask, whether a fully adaptive approach is possible for certain therapy concepts, i.e., to just present the body with a sufficient input that shares partial features with cancer cells but which can already be regulated/cured in an adaptive feedback loop. As a concrete example, consider again the immune system, which certainly can fight already a wide range of diseases. As long as it got presented a diverse enough set of features of cancers within harmless diseases, one could make the assumption that it eventually will learn, how to also fight the cancer itself \textcolor{black}{and also learn how to avoid cancer cell evasion to immune system attacks}. Of course, this \textcolor{black}{assumption} is not restricted to the immune system but should apply to train other regulatory organism components relevant for cancer progression.     

One may wonder, why it could be beneficial to also consider cancer models with less details, even if more detailed knowledge is available. In many scientific disciplines, a reductionist abstract approach has moved our understanding significantly forward. As a biological example, consider reduced neuron models just representing the basic excitability properties of neurons~\cite{ErmentroutTerman}. A similar remark applies to a variety of biological (phase) oscillator models for synchronization or swarming~\cite{Kuramoto,VicsekZafiris}. Epidemiology does often not require sophisticated models but standard SIS- and SIR-models provide a remarkably accurate description of qualitative dynamics of epidemic spreading~\cite{KissMillerSimon}. The list could be continued to many other disciplines. Here we shall suggest the viewpoint that more abstract models of cancer dynamics could be extremely useful as well since the basic principles are more visible, and hence also more easy to understand, or even to exploit. 


In this work, we propose two abstract models to describe cancer dynamics. First, we are interested in the macroscopic view of adaptivity. We are going to consider an ordinary differential equation (ODE) model~\cite{Murray1,MuellerKuttler} of just four variables: (I) an input variable $r$ providing a reservoir of cells that share some feature with cancer cells but are essentially harmless, (II) a variable $i$ of body components (e.g.~immune cells) that can \textcolor{black}{remove the input $r$ to progress via adaptation to more efficient components}, (III) a variable $s$ of components that arise due to adaptation of $i$ but so that cancer cells can be attacked (e.g.~``super-immune'' cells), and (IV) the density/proportion of cancer cells $c$. The main results from the analysis of this macroscopic density-based model are:

\begin{itemize}
 \item[(R1)] Due to the large cell division rate inherent in cancer cells, the first commonly observed behavior is cancer progression and ensuing death if the density of cancer cells reaches a certain threshold. For large parameter regimes, there is even a steady state with a large/deadly proportion of cancer cells in the model. \textcolor{black}{The vicinity of this steady state gets reached from large regions of initial data.} As such, the model is just reproducing our basic knowledge about macroscopic cancer dynamics.
 \item[(R2)] There is a bifurcation to a regime, where \textcolor{black}{there is no steady state with a non-zero cancer cell concentration} and a smaller initial spike of cancer cells occurs but after a longer time, the cancer calls are removed by the $s$ components. One main control parameter that can regulate this bifurcation is the rate $\gamma$ at which $r$ is generated. 
 \item[(R3)] Yet, the initial cancer cell spike still has exponential growth before it is removed from the system, which indicates the anticipated time scale to react is relatively small. It turns out that a crucial ingredient for this speed is to also increase the parameter $\beta$ to counter-act exponential growth as $\beta$ controls the conversion rate to generate $s$ components.  
\end{itemize}

So far, the ODE results confirm the main problem cancer poses, namely the adaptivity time scale of the organism is often too slow to cope with the progression of the cancer itself. Finding an abstract model to increase the parameters $\beta$ and $\gamma$ in the ODEs is now the second natural task. As before, we want to stress complex adaptive processes. In contrast to our ODE approach, designing a density-based mean-field differential equations model is likely not optimal. In particular, we need to understand and leverage the fact that generating $s$ components will initially occur in very small numbers. This makes discrete stochastic models more appealing to describe the microscopic/initial adaptive process. For example, thinking of an epidemic, we already understand very well that ODE models describe the larger time scales but initially branching/tree-like processes are more accurate and still conceptually very easy~\cite{Allen}. In our context, we aim to understand, how to improve adaptation by presenting $i$ components on a microscopic level a wide diversity of inputs at different times. In particular, we want to check, whether stochasticity and memory effects can help us to achieve the exponential compensation required for our first model. 

The stochastic principle we employ is that randomness may sometimes lead to favorable outcomes, i.e., adaptation via positive random events~\cite{BarrettSchluter,VanLaarhovenAarts}. In particular, we make the assumption that the organism exploits randomness in its adaptive/control process. Furthermore, it is clear that adaptivity also could be enhanced by memory, i.e., keeping variants of new components around that are further refined in future time steps. We encode this dynamical process using a certain fixed number of components each with a fixed number of attributes. At the initial time, all attributes are zero. Cancer cells can only be attacked if all attributes are eventually equal to one for a given component, which we can call a super-component (analogous to transitioning from $i$ to $s$ in our previous ODE model). There is a process adding information (PAI) to turn attributes from zero level into unit/one level (similar to the input variable $r$ in our first model) and there are rare stochastic events turning attributes into ones. We can abstractly think of this process as positive random events (PREs), which could occur effectively in all parts of an organism. Furthermore, there is a certain memory loss that components revert back to all zeros with a certain probability at each time step. The main results from our discrete (in time and space) microscopic model are:

\begin{itemize}
 \item[(R4)] For no PREs, the model requires an extremely effective/large PAI at each time step or a very long time to successfully combat cancer cells and turn all attributes to ones. This abstractly mirrors the situation we know that very high-intervention or long-duration therapies are currently often our only way to aim to combat cancers.
 \item[(R5)] Increasing the probability of PRE events does not seem to make a difference at first. Over substantial time scales, the number of perfect components stays extremely small. Yet, eventually a very drastic transition happens and the number perfect components becomes very high within a small period of time.
 \item[(R6)] Furthermore, a combination of PAI and PRE seems to be the most effective route to trigger the transition as early as possible in time.
\end{itemize}

The last two conclusions (R5)-(R6) are particularly interesting for potential cancer treatment using self-adapting principles. The challenge is that successful self-adapting transitions are, due to their nature as a sudden transition, very difficult to measure during treatment. This requires extremely careful planning and considerable testing to ensure that a longer-term therapy can succeed as progress monitoring could be difficult. Potentially this might be an explanation, why we have struggled to develop cancer therapies as it might be difficult to measure treatment success for ideas that are at least partially based upon utilizing self-adaptation. In some sense, this could also explain, why there are currently many proposals to consider vaccination-based ideas~\cite{Saxenaetal}. They increase the effective treatment time scale and could reduce the level of uncertainty, even for other future treatments beyond initial vaccination. 

In summary, the results (R1)-(R6) show that thinking about cancer dynamics with more abstract models can be very insightful. This observation holds true for macroscopic as well as for microscopic space-time scales. Our abstract mathematical results led to the conjecture that using the self-adaptive capacities of an organism could potentially be very useful in the quest to find cancer therapies. Furthermore, we have identified the two major roadblocks to this strategy: (1) the involved time-scales have to be manipulated in an optimal way to achieve a counter-exponential growth effect, and (2) progress monitoring could be extremely difficult due to the microscopic transition effect.\medskip  

The remaining part of the paper is structured as follows: In Section~\ref{sec:ODE}, we define and study the macroscopic ODE model for self-adaptation. We analyze the model using standard techniques from nonlinear dynamics to identify the main bifurcation point and to identify the relevant transient spike dynamics before reaching steady state. In Section~\ref{sec:prob}, we define and analyze the microscopic probabilistic model to understand, how we could increase the probability for self-adaptation on a macroscopic level. The model is analyzed using direct numerical simulations. Finally, in Section~\ref{sec:summary}, we briefly put our findings into a broader context leading to an outlook regarding possible future work.

\section{A Macroscopic Generic Model for Adaptivity in Cancer Dynamics}
\label{sec:ODE}

As outlined in Section~\ref{sec:intro}, we are first interested in an abstract macroscopic model for cancer dynamics and adaptive organism response. We introduce four variables modeling non-negative densities, which all depend upon a time variable $t\in [0,\I)$. First, suppose there is a ``reservoir'' variable $r=r(t)\in \R_{\geq 0}$, which is viewed as the only variable that can be driven/changed externally. It models those components that can be eliminated by a systemic organism response (e.g.~via the immune system) but which contain similarities (e.g.~mechanical, chemical, genetic, etc) to cancer cells. The reservoir gets replenished at a rate $\gamma\geq 0$ and there is an interaction function $F(r,i)$ modeling the interaction to the adaptive body component $i$ with rate $\beta\geq 0$\textcolor{black}{; this interaction is purely modelling the change of type between two components $i$ and $s$ described below}. Second, we consider the adaptive component $i=i(t)\in \R_{\geq 0}$ that we could, e.g., think of as ``immune cells''. This component has intrinsic dynamics $g(i)$ modeling growth/proliferation at rate $\alpha_i> 0$ and carrying capacity $N_i>0$. Furthermore, $i$ components can migrate at rate $\beta$ to a third component $s$ if $r$ interacts with $i$ modeled by a reaction term $G(r,i)$. Third, we consider a component $s=s(t) \in \R_{\geq 0}$ that models an improved or ``super'' version of $i$. There is again intrinsic dynamics $h(s)$ modeling growth/proliferation at rate $\alpha_s> 0$ and carrying capacity $N_s>0$. The super-version of $i$ can also be generated from the interaction term $F(r,i)$ between $r$ and $i$ at a rate $\beta \geq 0$. Fourth, we have the cancer cells $c=c(t) \in \R_{\geq 0}$ with intrinsic dynamics $j(c)$ with growth/proliferation at rate $\alpha_c> 0$ and carrying capacity $N_c>0$. The cancer cells can only be removed in an interaction $J(s,c)$, i.e., by the adaptively formed component $s$. In summary, the corresponding ODE model can be formulated as 
\bea
r'&=& \gamma - \beta F(r,i), \label{eq:riscgen1}\\ 
i'&=& g_i(i;\alpha_i,N_i) - \beta G(r,i),\label{eq:riscgen2}\\ 
s'&=& \beta H(r,i) + h(s;\alpha_s,N_s),\label{eq:riscgen3}\\ 
c'&=& j(c;\alpha_c,N_c) - J(c,s),\label{eq:riscgen4} 
\eea
where we used a semi-column in the functions to emphasize their parameter dependencies. We also refer to the class of ODEs as \textit{risc}-models due to the component names. \textcolor{black}{The structural mathematical form of the \textit{risc}-model is quite similar to other ODE models in biology, e.g., predator-prey dynamics, epidemic spreading, or neural spiking~\cite{Murray1,Murray2,MuellerKuttler}. A key feature is evidently the mechanism to transition from the $i$ components to the $s$ components, which will be studied and modeled in more microscopic detail in Section~\ref{sec:prob}.} Unfortunately, in the form~\eqref{eq:riscgen1}-\eqref{eq:riscgen4} the ODEs are too difficult to analyze as there are many unknown functions in the vector field. For certain simple invariant sets of the problem, a dynamical systems analysis might still be feasible for quite general intrinsic and interaction functions~\cite{KuehnSiegmundGross,KuehnGross,Vassena}. Here we shall not consider these analysis techniques. To explore the model, we just specify the functions by the simplest possible, and still biologically reasonable, forms. For the intrinsic dynamics of $i,s,c$, we will assume logistic growth/proliferation $x\mapsto \alpha_x x(1-x/N_x)$ for $x\in\{i,s,c\}$. Furthermore, for the interaction functions between components we take the simplest predator-prey-type or SI-spreading-type dynamics $(x,y)\mapsto xy$ that we know from basic biological modeling. With these choices, our ODE model reduces to the following form 
\bea
r'&=& \gamma - \beta ri,\label{eq:risc1}\\ 
i'&=& \alpha_i i\left(1-\frac{i}{N_i}\right) - \beta ri,\label{eq:risc2}\\ 
s'&=& \beta ri + \alpha_s s\left(1-\frac{s}{N_s}\right),\label{eq:risc3}\\ 
c'&=& \alpha_c c\left(1-\frac{c}{N_c}\right) - sc.\label{eq:risc4} 
\eea
As~\eqref{eq:risc1}-\eqref{eq:risc4} is a 'basic' variant of the \textit{risc} model class, we also refer to it as the \textit{brisc} model. \textcolor{black}{Of course, there is a significant simplification step for any low-dimensional ODE model in the context of biology. No simplified ODE is a completely biologically accurate representation of the actual processes. Yet, these models turned out to be very useful for our conceptual understanding. Next, from a mathematical viewpoint local} existence of unique smooth ODE solutions of the \textit{brisc} model is clear due to local Lipschitz continuity and smoothness of the vector field. The mathematical question of global-in-time existence is not of primary relevance for us here as we are only interested in the initial dynamics as well as potential well-defined long-term bounded solutions. If any one of the components have a finite-time blowup, then the resulting solutions do not really have an immediate biological interpretation in our context. Figure~\ref{fig:01} shows a direct numerical simulation of the \textit{brisc} model\footnote{Simulations were carried out using a robust stiff variable-order numerical method (\texttt{ode15s}) in MatLab 2024a.}. We observe that for no self-adaptation ($\beta=0$), the cancer very quickly reaches a comparable density to $i$ components, which would certainly mean death of the organism. For small self-adaptation ($\beta=0.1$), we see that eventually a decrease in cancer cells is reached but the initial spike is again so large that again we expect the cancer to be deadly. Only for significantly larger values ($\beta=20$), we notice that the initial spike of cancer density is lowered while it eventually even dis-appears on a long time scale. Basically, one could \textcolor{black}{conjecture that} these three cases to the three common cancer progressions: (I) relatively rapid death, (II) decrease in cancer cells but possible persistence/re-emergence of cancer cells, and (III) complete cancer removal (which eventually has to take place via stochastic events, which we do not model here at small density of $c$). \textcolor{black}{In this regard, we remark that this type of trichotomy is actually common in mathematical models for disease dynamics, e.g., in well-known SIRS type population models.} 

\begin{figure}[htbp]
	\centering
	\begin{overpic}[width=1.0\textwidth]{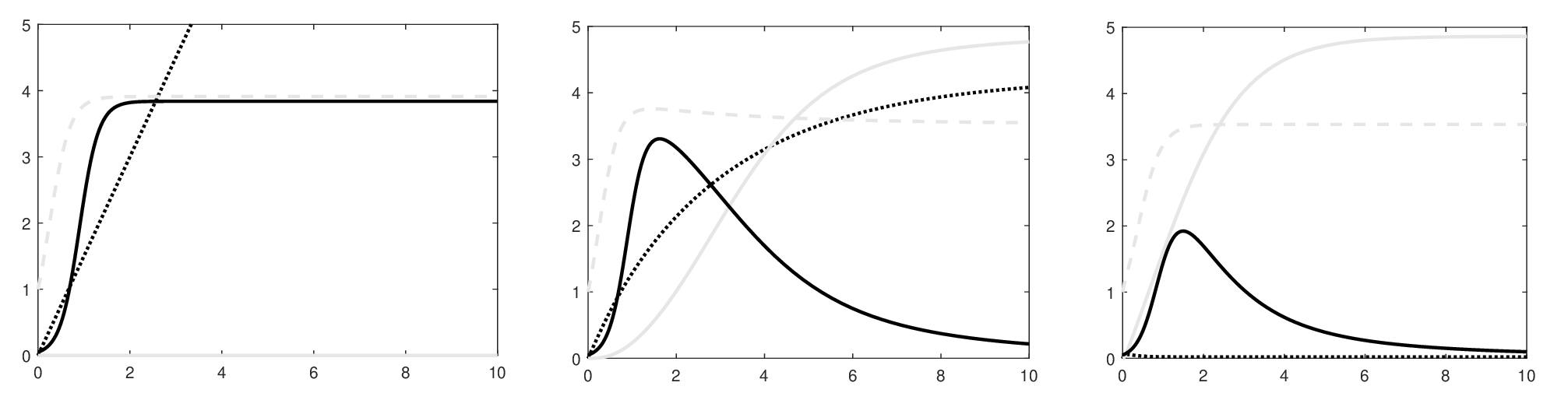}
		\put(17,0){$t$}
		\put(51,0){$t$}
		\put(84,0){$t$}
		\put(13,25){$\beta=0$}
		\put(48,25){$\beta=0.1$}
		\put(80,25){$\beta=20$}
	\end{overpic}
	\caption{\label{fig:01}Numerical integration of the \textit{brisc} model~\eqref{eq:risc1}-\eqref{eq:risc4}. The parameter $\beta\in\{0,0.1,20\}$ is varied \textcolor{black}{for the same initial condition $(r(0),i(0),s(0),c(0))=(0,1,0,0.05)$} and fixed parameters $\gamma=1.5$, $\alpha_i=4.4$, $\alpha_s=0.3$, $\alpha_c=4.8$ and $N_i=N_s=N_c=4$. The variables of the times series of the four variables $r$ (black dotted), $i$ (grey dashed), $s$ (grey solid) and $c$ (black solid), are plotted. We observe that the strength of the self-adaptation $\beta$ can indeed be crucial for the progression of a cancer\textcolor{black}{, i.e., for higher self-adaptation the peak concentration is lowered}.}
\end{figure}

We note that we do not normalize the components $i,s,c$. This is not needed as we are only interested in the relative growth over time and whether $c$ can reach a given threshold value, say $c\geq c^*>0$. This threshold marks the point, where recovery of the organism is impossible and effectively immediate death occurs. \textcolor{black}{Setting the threshold $c^*$ is somewhat open as it might be specific for each organism with cancer, so we stress that multiple choices are possible but we eventually have to select one option}. For example, it could be reasonable to assume that such a $c^*$ must be lower than the steady state of the component $i^*$ that can be present in the body before cancer onset at steady state\textcolor{black}{, which is the choice we make here. Just for a broader perspective, we mention that an alternative choice could be to select the threshold $c_*$ relative to the carrying capacity $N_c$ but this then raises the issue how far below this threshold we want to stay.} The value $i^*$ can be defined by studying~\eqref{eq:risc2} with $\beta=0$, which yields
\be
\label{eq:ialone}
i'= \alpha_i i\left(1-\frac{i}{N_i}\right).
\ee 
There are two steady states for~\eqref{eq:ialone} given by the trivial steady state $i=0$ and the nontrivial steady state $i=N_i$. By direct linearization, we check that $i=0$ is unstable as we assume $\alpha_i>0$. The nontrivial steady state is globally stable in the biologically relevant region for \eqref{eq:ialone} as long as we consider $i(0)>0$, which can be seen by a trivial argument as one-dimensional ODEs are gradient systems. So if we would set, e.g., $i^*=N_i/2$, then we see a clear difference in behavior in Figure~\ref{fig:01} between the cases $\beta=0.1$ and $\beta=20$. Indeed, for $\beta=0.1$ we cross $i^*$ in time, while we slightly stay below it for $\beta=20$. Before going into more detailed transient dynamics analysis, let us first work out the basic time-asymptotic properties for the \textit{brisc} model~\eqref{eq:risc1}-\eqref{eq:risc4}. 

\begin{lem}
\label{lem:steady}
There are at most twelve steady states $(r_*,i_*,s_*,c_*)$ of~\eqref{eq:risc1}-\eqref{eq:risc4} given by
\benn
(r_*,i^\pm_*,s^\pm_*,c_*)=\left(\frac{\gamma}{\beta i_*},\frac{N_i}{2}\pm\sqrt{\frac{N_i^2}{4}-\frac{\gamma N_i}{\alpha_i}},\frac{N_s}{2}\pm \sqrt{\frac{N_s^2}{4}+\frac{\gamma N_s}{\alpha_s}},c_*\right),
\eenn 
where there are three possible options for $c_*$ given by $c_*=0$ or $c_*^\pm=N_c(1-s_*^\pm/\alpha_c)$ (where $s_*^\pm$ could take two values as stated above).
\end{lem}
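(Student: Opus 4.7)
The plan is to solve $r'=i'=s'=c'=0$ as a cascade of algebraic equations, exploiting the triangular structure of the vector field. First I would observe that in any steady state one must have $i_*\neq 0$: indeed $i_*=0$ combined with $r'=0$ forces $\gamma=0$, contradicting the implicit assumption $\gamma>0$. Dividing $r'=0$ by $i_*$ then gives the explicit expression $r_*=\gamma/(\beta i_*)$, so $r_*$ is determined once $i_*$ is known.

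Next I would substitute this into $i'=0$. After factoring out $i_*\neq 0$ from $\alpha_i i_*(1-i_*/N_i)-\beta r_* i_*=0$, the identity $\beta r_* i_* = \gamma$ (coming from $r'=0$) lets me eliminate $r_*$ and obtain the quadratic
\[
i_*^2 - N_i\, i_* + \frac{\gamma N_i}{\alpha_i} = 0,
\]
whose two roots give the stated $i_*^{\pm}$. The same trick works for $s$: since $\beta r_* i_* = \gamma$, the equation $s'=0$ collapses to $\gamma + \alpha_s s_*(1-s_*/N_s)=0$, which is the quadratic
\[
s_*^2 - N_s\, s_* - \frac{\gamma N_s}{\alpha_s} = 0,
\]
yielding the two roots $s_*^{\pm}$. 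Notice that the sign in the discriminant for $s_*$ is opposite to that for $i_*$ because the $\beta r i$ term enters $s'$ with the opposite sign from $i'$, which is the reason the two square roots in the lemma differ.

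For the cancer component, $c'=0$ factors as $c_*\bigl(\alpha_c(1-c_*/N_c)-s_*\bigr)=0$, so either $c_*=0$ or $c_*=N_c(1-s_*/\alpha_c)$; substituting $s_*^{\pm}$ into the nontrivial branch produces $c_*^{\pm}$. Assembling everything, each steady state is obtained by independently choosing the sign in $i_*^{\pm}$, the sign in $s_*^{\pm}$, and one of the three candidate $c_*$ values, for a nominal total of $2\cdot 2\cdot 3 = 12$ candidates, and $r_*$ is then forced by $r_*=\gamma/(\beta i_*)$. This justifies the upper bound of twelve.

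There is no real obstacle here: the argument is routine algebra once one notices that the relation $\beta r_* i_*=\gamma$ from the first equation can be reused to simplify both $i'=0$ and $s'=0$. The only subtlety worth flagging is that the word "at most" is essential — the bound is not always attained, because the discriminants $N_i^2/4-\gamma N_i/\alpha_i$ may be negative (losing real $i_*$ roots), roots may coalesce, the $c_*^{\pm}$ may fail to be biologically admissible (i.e.\ may be negative), and the three $c_*$ options are not genuinely independent of the sign chosen in $s_*$ so several of the twelve candidates formally coincide.
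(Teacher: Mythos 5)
Your proposal is correct and follows essentially the same route as the paper's proof: rule out $i_*=0$ via $\gamma>0$, solve the triangular cascade $r_*\to i_*\to s_*\to c_*$ using $\beta r_* i_*=\gamma$ to reduce to the two quadratics, and count the sign/branch choices. Your closing remark that the three $c_*$ options are not independent of the sign of $s_*$ (so the bound of twelve is not tight) is a nice observation that goes slightly beyond what the paper records.
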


\begin{proof}
If we solve for the steady states of~\eqref{eq:risc2}, then one possibility is $i_*=0$. However, this is incompatible with \eqref{eq:risc1} as we assumed $\gamma>0$. So we can rule out $i_*=0$. Then~\eqref{eq:risc1} gives $r_*=\gamma/(\beta i_*)$ and hence~\eqref{eq:risc2} yields
\benn
0\stackrel{!}{=}\alpha_i(1-i_*/N_i)-\beta r_*=\alpha_i(1-i_*/N_i)-\gamma/ i_* \quad \Rightarrow~i_*^\pm=\frac{N_i}{2}\pm\sqrt{\frac{N_i^2}{4}-\frac{N_i\gamma}{\alpha_i}}.
\eenn 
From~\eqref{eq:risc3}-\eqref{eq:risc4} we can then obtain with a similar direct calculation the values for $s_*^\pm$ and $c_*$. Then we just count that $r_*$ is fixed, then $i_*^\pm$ gives two solutions, which leads then to maximally four solutions from the formula for $s_*^\pm$. Finally, since there are three possible solutions for $c_*$, this gives at most twelve possible solutions.   
\end{proof}

We do not expect that all steady states that are algebraically possible will matter for the biological application. However, one should be careful to not eliminate too many phenomena by making too stringent assumptions. Let us illustrate this last point by a certain parameter regime. It might be tempting to biologically assume that $\alpha_c$ is very large as cancers grow very quickly and set $\varepsilon:=1/\alpha_c$ to obtain a fast-slow ODEs~\cite{KuehnBook,Hek} 
\bea
r'&=& \gamma - \beta ri,\label{eq:risc1fs}\\ 
i'&=& \alpha_i i\left(1-\frac{i}{N_i}\right) - \beta ri,\label{eq:risc2fs}\\ 
s'&=& \beta ri + \alpha_s s\left(1-\frac{s}{N_s}\right),\label{eq:risc3fs}\\ 
c'&=& \frac{1}{\varepsilon} c\left(1-\frac{c}{N_c}\right) - sc.\label{eq:risc4fs} 
\eea
Unfortunately, the next result illustrates, why the limit $\varepsilon\ra 0$ is not immediately suitable for the problem.

\begin{lem}
The critical manifold $\cC_0$ of \eqref{eq:risc1fs}-\eqref{eq:risc4fs} is given by two components
\benn
\cC_0=\{(r,i,s,c)\in\R^4_{\geq 0}:c=0\}\cup \{(r,i,s,c)\in\R^4_{\geq 0}:c=N_c\}=:\cC_{\textnormal{r}} \cup \cC_{\textnormal{a}}
\eenn
where $\cC_{\textnormal{r}}$ is normally hyperbolic repelling and $\cC_{\textnormal{a}}$ is (actually globally) normally hyperbolic attracting.
\end{lem}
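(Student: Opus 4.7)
The plan is to apply the standard definitions of geometric singular perturbation theory to~\eqref{eq:risc1fs}-\eqref{eq:risc4fs}. Only the $c$-equation carries the singular prefactor $1/\varepsilon$, so $c$ is the unique fast variable while $(r,i,s)$ are slow. Multiplying \eqref{eq:risc4fs} by $\varepsilon$ and passing to the fast time $\tau=t/\varepsilon$, the layer problem at $\varepsilon=0$ reduces to the scalar logistic ODE
$$
\frac{\txtd c}{\txtd \tau}=c\left(1-\frac{c}{N_c}\right),
$$
with $(r,i,s)$ acting as frozen parameters since $\txtd r/\txtd\tau=\txtd i/\txtd\tau=\txtd s/\txtd\tau=0$ at $\varepsilon=0$. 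By the definition of $\cC_0$ as the zero set of the fast vector field at $\varepsilon=0$, restricted to $\R^4_{\geq 0}$, I would then solve $c(1-c/N_c)=0$, which immediately yields the two stated components $\cC_{\textnormal{r}}=\{c=0\}$ and $\cC_{\textnormal{a}}=\{c=N_c\}$.

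Next I would verify normal hyperbolicity by computing the single fast eigenvalue, namely the derivative of the fast vector field with respect to $c$:
$$
\partial_c\!\left[c\left(1-\frac{c}{N_c}\right)\right]=1-\frac{2c}{N_c}.
$$
Evaluating on each component gives $+1$ on $\cC_{\textnormal{r}}$ and $-1$ on $\cC_{\textnormal{a}}$; both are nonzero, so both components are normally hyperbolic, and the sign of the eigenvalue identifies $\cC_{\textnormal{r}}$ as repelling and $\cC_{\textnormal{a}}$ as attracting.

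To upgrade local attractivity to the global statement asserted for $\cC_{\textnormal{a}}$, I would observe that on each slow fibre $(r,i,s)=\textnormal{const}$ the layer equation is a one-dimensional gradient system on $\R_{\geq 0}$ with only the two equilibria $0$ and $N_c$. Elementary phase-line analysis, of exactly the same kind already invoked for~\eqref{eq:ialone}, shows that every trajectory with $c(0)>0$ converges monotonically to $N_c$ as $\tau\to\I$, which is precisely the globally attracting character claimed. No substantive obstacle arises here: because the fast subsystem is scalar, the entire multi-dimensional Fenichel bookkeeping collapses to a single sign check plus a one-dimensional monotonicity argument, with geometric singular perturbation theory supplying only the framework in which to read off these elementary conclusions.
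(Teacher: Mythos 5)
Your proof is correct and follows essentially the same route as the paper: the paper's entire argument is the single computation $\txtD_c\bigl(c(1-c/N_c)\bigr)=1-2c/N_c$ evaluated on the two components, which is exactly your sign check of $+1$ on $\{c=0\}$ and $-1$ on $\{c=N_c\}$. Your additional spelling-out of the layer problem and the one-dimensional phase-line argument for the global attractivity of $\cC_{\textnormal{a}}$ merely makes explicit what the paper leaves implicit.
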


\begin{proof}
One just calculates 
\benn
\txtD_c\left(c(1-c/N_c)\right)=1-2c/N_c
\eenn 
and evaluates this expression on the critical manifold. 
\end{proof}

Standard Fenichel theory then gives us the existence of a slow manifold $\cC_\varepsilon$ for $\varepsilon>0$ sufficiently small~\cite{Fenichel4,KuehnBook}. The slow manifold inherits the stability properties of the normally hyperbolic critical manifold. So as long as the initial value satisfies $c(0)>0$, which is natural to assume for our setting, then we get quickly attracted to $\cC_{\textnormal{a}}$. Unfortunately, this does not cover the transition regime we are interested in, where it should be possible to reach different asymptotic cancer densities. The pure singular limit $\alpha_c\ra \I$ is not of immediate practical relevance. A similar remark applies to other singular limits, e.g., $\gamma,\beta\ra \I$ is practically impossible to reach. Hence, we have to do a full stability/bifurcation analysis.

\begin{lem}
\label{lem:physss}
Restricting the phase space to $(r,i,s,c)\in\R^4_{\geq 0}$, there are most four possible steady states
\benn
(r_*,i_*^\pm,s_*,c_*)=\left(\frac{\gamma}{\beta i_*},\frac{N_i}{2}\pm\sqrt{\frac{N_i^2}{4}-\frac{\gamma N_i}{\alpha_i}},\frac{N_s}{2}+\sqrt{\frac{N_s^2}{4}+\frac{\gamma N_s}{\alpha_s}},c_*\right),
\eenn 
where $c_*=0$ or $c_*=N_c(1-s_*/\alpha_c)$. Furthermore, there is
\begin{itemize}
 \item[(L1)] only one possible non-trivial steady state with $c_*>0$ if 
\be
\label{eq:onlycancerzero}
N_i\geq\frac{4\gamma }{\alpha_i} \qquad \text{and}\qquad \alpha_c\leq \frac{N_s}{2}+\sqrt{\frac{N_s^2}{4}+\frac{\gamma N_s}{\alpha_s}}.
\ee
 \item[(L2)] no possible steady states if 
\be
\label{eq:limitedcarryi}
N_i<\frac{4\gamma }{\alpha_i}.
\ee
\end{itemize}
\end{lem}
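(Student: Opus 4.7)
The plan is to start from the algebraic list of twelve candidate steady states given by Lemma~\ref{lem:steady} and successively impose the non-negativity constraint $(r,i,s,c)\in\R^4_{\geq 0}$ coordinate by coordinate. Each coordinate will peel off a simple sign condition, and together these conditions will cut the count down to at most four and produce the dichotomy (L1)/(L2).

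First I would eliminate the $s$-branch labeled $s_*^-$. Because $\gamma>0$ and $N_s,\alpha_s>0$, the discriminant $N_s^2/4+\gamma N_s/\alpha_s$ strictly exceeds $N_s^2/4$, so $\sqrt{N_s^2/4+\gamma N_s/\alpha_s}>N_s/2$, forcing $s_*^-<0$. Hence only the branch $s_*^+=N_s/2+\sqrt{N_s^2/4+\gamma N_s/\alpha_s}$ is physical, which is precisely the single $s$-value appearing in the statement. This by itself already halves the count from twelve to six.

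Second, I would analyze the $i$-branch. The radicand $N_i^2/4-\gamma N_i/\alpha_i$ is non-negative iff $N_i\geq 4\gamma/\alpha_i$. If this fails, then $i_*^\pm$ are non-real, so no steady state exists at all; this is exactly (L2). When $N_i\geq 4\gamma/\alpha_i$, the square root is at most $N_i/2$, so both roots $i_*^\pm$ lie in $[0,N_i]$, and in fact $i_*^->0$ strictly since $\gamma N_i/\alpha_i>0$. Consequently $r_*=\gamma/(\beta i_*^\pm)$ is well-defined and positive. The count is now $2$ values of $i_*$ times $3$ values of $c_*$, i.e., at most six, but two of those $c_*$-formulas involved the eliminated $s_*^-$ and so are discarded, leaving at most four candidates in $\R_{\geq 0}^4$ as claimed.

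Third, I would handle $c_*$. The value $c_*=0$ is always admissible. The alternative $c_*=N_c(1-s_*^+/\alpha_c)$ is non-negative iff $s_*^+\leq \alpha_c$, i.e., iff $\alpha_c\geq N_s/2+\sqrt{N_s^2/4+\gamma N_s/\alpha_s}$. Therefore, under the second inequality in (L1), this positive-cancer branch is driven out of $\R_{\geq 0}$ and only $c_*=0$ survives, which together with the already-established $i_*^\pm$ and $s_*^+$ gives the remaining steady states described in (L1). Combining this with the first inequality in (L1) (needed to keep $i_*^\pm$ real) completes that case, while (L2) is immediate from the discussion above. The calculation is routine throughout; the only mild subtlety is to verify that the boundary cases $N_i=4\gamma/\alpha_i$ and $\alpha_c=s_*^+$ merely collapse steady states rather than create new ones, and there is no genuine obstacle in the proof.
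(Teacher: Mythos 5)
Your proof is correct and follows essentially the same route as the paper's: discard $s_*^-$ by the strict square-root estimate, count the remaining combinations, and then read off (L2) from the radicand of $i_*^\pm$ and (L1) from the sign of $c_*=N_c(1-s_*/\alpha_c)$. The paper's own proof is just a terser version of this argument, and your reading of (L1) as the regime where the positive-cancer branch leaves $\R^4_{\geq 0}$ matches the intent signalled by the label of \eqref{eq:onlycancerzero} and by Corollary~\ref{cor:transcritical}.
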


\begin{proof}
There is only one possible steady state value $s_*^+=s_*$ in the biological region as we just estimate for $s^-_*$ that
\benn
\frac{N_s}{2}- \sqrt{\frac{N_s^2}{4}+\frac{\gamma N_s}{\alpha_s}}< \frac{N_s}{2}- \sqrt{\frac{N_s^2}{4}}=0. 
\eenn
So there is only one possibility for $s_*$ and hence only two possibilities for $c_*$, which then yields at most four possible steady states. To prove the last two statements, we just look carefully at the arguments of the square-roots in $i_*^\pm$ and check when $c_*$ is negative.  
\end{proof}

We already observe that the condition~\eqref{eq:onlycancerzero} is absolutely crucial as it means that the nonzero cancer density state state can disappear under parameter variation. In particular, we have as a direct corollary of the algebraic form of the steady states: 

\begin{cor}
\label{cor:transcritical}
\textcolor{black}{There exists a fold bifurcation when $N_i=\frac{4\gamma }{\alpha_i}$. Furthermore, the non-trivial cancer equilibrium with $c_*>0$ leaves the physical domain when $\alpha_c= \frac{N_s}{2}+\sqrt{\frac{N_s^2}{4}+\frac{\gamma N_s}{\alpha_s}}$ at a transcritical bifurcation.}
\end{cor}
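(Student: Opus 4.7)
The plan is to read off both bifurcations directly from the algebraic formulas in Lemma~\ref{lem:physss} and then match them against the standard normal-form conditions. For the fold, I would start from the discriminant $\Delta_i:=\frac{N_i^2}{4}-\frac{\gamma N_i}{\alpha_i}$ appearing in $i_*^\pm$. At $N_i=4\gamma/\alpha_i$ we have $\Delta_i=0$ and $i_*^+=i_*^-=N_i/2$, while for $N_i$ slightly larger the two branches separate and for $N_i$ slightly smaller they vanish from the real (and in particular from the physical) domain, by Lemma~\ref{lem:physss}(L2). This is the defining feature of a fold/saddle-node. To confirm it is indeed a fold and not a degenerate collision, I would compute the Jacobian of the $(r,i)$-subsystem~\eqref{eq:risc1}--\eqref{eq:risc2} at the collision point and check that it has a simple zero eigenvalue together with the standard non-degeneracy conditions (nonzero quadratic coefficient in the reduced equation, transverse unfolding in $N_i$). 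Because the $(r,i)$-subsystem decouples from $(s,c)$ in the equilibrium equations, this zero eigenvalue persists for the full four-dimensional Jacobian.

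For the transcritical statement, I would exploit that the trivial branch $c_*=0$ exists for all parameter values and that the nontrivial branch $c_*=N_c(1-s_*/\alpha_c)$ crosses it precisely when $\alpha_c=s_*=\frac{N_s}{2}+\sqrt{\frac{N_s^2}{4}+\frac{\gamma N_s}{\alpha_s}}$. Since only $c_*$ changes along this crossing while $(r_*,i_*^\pm,s_*)$ stay fixed, the reduction to the scalar $c$-equation~\eqref{eq:risc4} at the equilibrium values of the other coordinates is $c'=c(\alpha_c-s_*-\alpha_c c/N_c)$, which is literally the normal form of a transcritical bifurcation in $c$ with bifurcation parameter $\alpha_c-s_*$. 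I would then verify that the full Jacobian at the crossing point has a simple zero eigenvalue coming from the $c$-direction, with the remaining $(r,i,s)$-block nonsingular (which is automatic away from the fold point analyzed above), so the Sotomayor transversality conditions for a transcritical reduce to the scalar calculation just performed.

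The main obstacle, as far as I can see, is not any of the algebraic steps but the verification of the non-degeneracy and transversality conditions in the four-dimensional setting, i.e., checking that the one-dimensional center manifold indeed captures the bifurcation and that no additional zero eigenvalues sneak in at the claimed parameter values. For the fold this means ruling out simultaneous vanishing of $\alpha_c-s_*$ at $N_i=4\gamma/\alpha_i$ (a generic parameter assumption, stated or relegated to a remark), and for the transcritical it means excluding the coincidence $N_i=4\gamma/\alpha_i$. Both are transversality assumptions that can be written down explicitly from the formulas and declared as the generic case. Once that is said, the rest is an application of the one-parameter saddle-node and transcritical bifurcation theorems to a one-dimensional center manifold obtained by a trivial projection, so the proof should be short.
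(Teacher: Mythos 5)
Your proposal follows essentially the same route as the paper: the fold is read off from the collision of the branches $i_*^\pm$ when the discriminant $\frac{N_i^2}{4}-\frac{\gamma N_i}{\alpha_i}$ vanishes, and the transcritical point from the crossing of the branch $c_*=N_c(1-s_*/\alpha_c)$ with the trivial branch $c_*=0$ at $\alpha_c=s_*$, confirmed by the eigenvalue $\lambda_1=\alpha_c-s_*-2c_*\alpha_c/N_c$ of the Jacobian in Lemma~\ref{lem:eigenvalues}. The additional verification of the Sotomayor non-degeneracy and transversality conditions on a one-dimensional center manifold that you outline is precisely the step the paper acknowledges as possible but deliberately omits, so your write-up is, if anything, slightly more complete than the paper's.
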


\textcolor{black}{In particular, we see that increasing $\gamma$ is one natural route to eventually remove all steady states. This is very reasonable from a modeling perspective as $\gamma$ is the parameter that inputs additional $r$ components into the system that produce $s$ components, which eliminate $c$. Furthermore, tuning other system parameters can also remove the non-trivial cancer state with $c_*>0$ from the physical domain.} Although the main bifurcation points are now identified, let us also look at the eigenvalues for completeness:

\begin{lem}
\label{lem:eigenvalues}
The Jacobian at the steady state from Lemma~\ref{lem:steady} is given by
\be
\label{eq:Jacobian}
\left(
\begin{array}{cccc}
-\beta i_* & -\beta r_* & 0 & 0 \\
-\beta i_* & \alpha_i-2\alpha_i i_*/N_i-\beta r_* & 0 & 0\\
\beta i_* & \beta r_* & \alpha_s -2\alpha_s s_* & 0\\
0 & 0 & -c_* & \alpha_c-2\alpha_c c_*/N_c -s_* 
\end{array}
\right).
\ee
The eigenvalues $\lambda_j$ for $j\in\{1,2,3,4\}$ are given by 
\benn
\lambda_1=\alpha_c-s_*-2c_*\alpha_c/N_c,\quad \lambda_2=\alpha_s-2s_*\alpha_s,~\lambda_{3,4}=\lambda_{3,4}(r_*,i_*,N_i,N_c,\beta).
\eenn
\end{lem}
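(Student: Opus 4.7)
The plan is to verify the Jacobian matrix by direct componentwise differentiation of the vector field in \eqref{eq:risc1}-\eqref{eq:risc4}, and then to extract the eigenvalues by exploiting the block structure of the resulting matrix. First I would compute the sixteen partial derivatives $\partial_r, \partial_i, \partial_s, \partial_c$ of the four right-hand sides, which is a purely routine exercise: for example $\partial_i(\gamma - \beta r i) = -\beta r$, and $\partial_i(\alpha_i i(1-i/N_i) - \beta r i) = \alpha_i - 2\alpha_i i/N_i - \beta r$. Evaluating these derivatives at $(r_*,i_*,s_*,c_*)$ yields the displayed matrix \eqref{eq:Jacobian}, observing that $\partial_s r' = \partial_c r' = \partial_s i' = \partial_c i' = \partial_c s' = 0$ because the corresponding terms are absent from the vector field.

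Next, I would highlight the key structural observation that the Jacobian is block lower triangular with two $2\times 2$ blocks on the diagonal, namely
\benn
J_{11}=\begin{pmatrix}-\beta i_* & -\beta r_*\\ -\beta i_* & \alpha_i-2\alpha_i i_*/N_i-\beta r_*\end{pmatrix},\qquad J_{22}=\begin{pmatrix}\alpha_s-2\alpha_s s_* & 0\\ -c_* & \alpha_c-2\alpha_c c_*/N_c-s_*\end{pmatrix}.
\eenn
Since the off-diagonal block $J_{12}$ vanishes, the characteristic polynomial factors as $\det(J-\lambda I_4)=\det(J_{11}-\lambda I_2)\det(J_{22}-\lambda I_2)$. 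Because $J_{22}$ is itself lower triangular, its eigenvalues are read off from its diagonal, giving exactly $\lambda_1=\alpha_c-s_*-2c_*\alpha_c/N_c$ and $\lambda_2=\alpha_s-2\alpha_s s_*$ as claimed.

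Finally, the remaining two eigenvalues $\lambda_{3,4}$ are the roots of the quadratic $\det(J_{11}-\lambda I_2)=\lambda^2 - \text{tr}(J_{11})\lambda + \det(J_{11})=0$, and so are given by the standard formula $\lambda_{3,4}=\tfrac{1}{2}\bigl(\text{tr}(J_{11})\pm\sqrt{\text{tr}(J_{11})^2-4\det(J_{11})}\bigr)$. Since $J_{11}$ only involves the entries $r_*, i_*, \beta, \alpha_i, N_i$ through its coefficients, $\lambda_{3,4}$ indeed depends solely on these quantities as stated in the lemma; one can substitute $r_*=\gamma/(\beta i_*)$ to re-express the dependence in terms of the bifurcation parameter $\gamma$ if desired.

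There is no real obstacle here: the entire proof reduces to a careful differentiation and the observation of the block-triangular pattern. The only mildly delicate point is noting that row four has a nonzero entry $-c_*$ in column three, but this sits inside the lower-triangular block $J_{22}$ and so does not obstruct the factorization of the characteristic polynomial; it merely shifts off the diagonal without affecting the eigenvalues of $J_{22}$.
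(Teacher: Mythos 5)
Your proposal is correct and follows essentially the same route as the paper: compute the Jacobian by direct differentiation, read off two eigenvalues from the lower-right $2\times 2$ lower-triangular block, and obtain $\lambda_{3,4}$ from the quadratic characteristic polynomial of the upper-left block. Your explicit identification of the parameters entering $\lambda_{3,4}$ (namely $r_*,i_*,\beta,\alpha_i,N_i$) is in fact slightly more careful than the dependence list written in the lemma statement.
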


\begin{proof}
The first result \eqref{eq:Jacobian} follows directly by differentiation. The first two eigenvalues follow from lower right block of the Jacobian and the remaining two eigenvalues can be calculated by algebraic formulas directly\footnote{The actual double square-root formulas are not very insightful, we shall just work with these in suitable limits.}. 
\end{proof}

From the eigenvalue $\lambda_1$ we find, as expected from Corollary~\ref{cor:transcritical}, the bifurcation condition that the eigenvalue vanishes along the subspace $\{c_*=0\}$ precisely when $\alpha_c=s_*$. In principle, one could now also calculate a full one-dimensional center manifold reduction and find the normal form coefficients for the bifurcation on the center manifold. Yet, this will not lead to any additional biological insight in comparison to the direct algebraic formulas that gave Corollary~\ref{cor:transcritical}. 

As such, one might believe that this solves the problem, just increase $\gamma$ \textcolor{black}{or change other system parameters suitably}. The situation is not so simple unfortunately, e.g., we know that $N_i\alpha_i-4\gamma<0$ yields no steady states at all, so arbitrary increase in $\gamma$ \textcolor{black}{may} not the solution as this would deplete $i$ too quickly. One might \textcolor{black}{potentially} interpret this biologically as the simplest manifestation of a ``cytokine storm'' for general self-adaptation, where too much self-adaptation damages the basics of an organism; it might be reasonable to call this effect self-adaptation storm as it may in principle happen also in many other regulatory/self-adapting systems, not only the immune system. Yet, in practical terms it might be possible to control $\gamma$ to avoid it getting too large, e.g., by slowly increasing $\gamma$. \textcolor{black}{In addition to analytical calculations, we have also performed numerical continuation to cross-validate our results and to check, whether some interesting higher co-dimension bifurcations are easy to find; see Figure~\ref{fig:06}. We observe that for $\gamma$ sufficiently large, all steady states dis-appear as expected via a fold bifurcation but we could not locate easily any interesting higher co-dimension bifurcations except for a very degenerate point at the boundary of the parameter domain.}

\begin{figure}[htbp]
	\centering
	\begin{overpic}[width=1.0\textwidth]{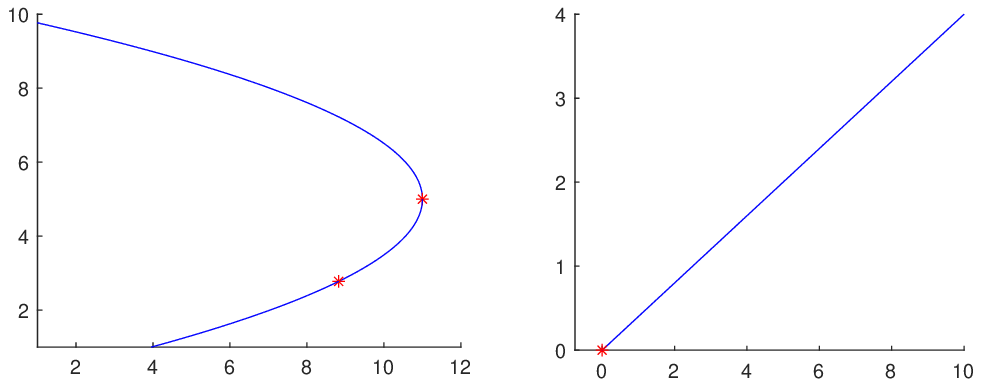}
		\put(0,33){$i_*$}
		\put(43,0){$\gamma$}
		\put(94,0){$\gamma$}
		\put(54,33){$\alpha_i$}
		\put(37,35){(a)}
		\put(80,35){(b)}
		\put(43,16){LP}
		\put(34,7.5){H}
		\put(59,4){CP}
	\end{overpic}
	\caption{\textcolor{black}{\label{fig:06}Numerical continuation carried out in MatCont~\cite{Dhoogeetal} for fixed parameters $\beta=1$, $N_i=10$, $\alpha_s=0.3$, $N_s=10$, $\alpha_c=4.8$, and $N_c=10$. (a) Continuation of a family of equilibria with $c_*=0$ in the main bifurcation parameter $\gamma$. We see that there exists a fold bifurcation (LP, limit point) so that if $\gamma$ is sufficiently large, two steady states disappear as predicted analytically. Also, a Hopf point (H) is detected, which is just a neutral saddle and does not seem to play a major role in the dynamics. (b) Two-parameter continuation in $(\gamma,\alpha_i)$ parameter space for the fold bifurcation point. At the degenerate origin in parameter space a cusp point (CP) is detected. Yet, over the physically relevant wide ranges of positive parameters there is no other co-dimension two bifurcation.}}
\end{figure} 

Unfortunately, even if we achieve the death of $s$ components with a moderate $\gamma$, utilizing self-adaptivity is not that simple. We have already seen in Figure~\ref{fig:01} for $\beta=0.1$ that the rate $\beta$ at which the organism can self-adapt and produce $s$ components might still be too low, even though the nontrivial cancer steady state for $c_*>0$ is not reached. The transient behavior is potentially still too dangerous as the spike in $c$ is potentially too large as discussed already above. A first hint, why the parameter $\beta$ also plays a key role is most easily visible if we look at the eigenvalues:

\begin{lem}
$\lim_{\beta\ra \I} \lambda_{3,4}\in\{0,-\I\}$.
\end{lem}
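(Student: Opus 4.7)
The plan is to reduce the question to the upper-left $2\times 2$ Jacobian block and extract the two eigenvalue limits via a direct asymptotic analysis, taking advantage of the fold condition from Corollary~\ref{cor:transcritical}. Since \eqref{eq:Jacobian} is block lower triangular, $\lambda_{3,4}$ are the eigenvalues of
\[
A = \begin{pmatrix} -\beta i_* & -\beta r_* \\ -\beta i_* & \alpha_i(1-2i_*/N_i) - \beta r_* \end{pmatrix},
\]
decoupled from the $(s,c)$ block (which already supplied $\lambda_{1,2}$ in closed form, both $\beta$-independent). First I would invoke the two steady-state identities embedded in the proof of Lemma~\ref{lem:physss}: $\dot r_* = 0$ gives $\beta r_* = \gamma/i_*$ (and is therefore $\beta$-independent), while $\dot i_* = 0$ gives $\alpha_i(1-i_*/N_i) = \beta r_*$, which collapses the $(2,2)$ entry of $A$ to $-\alpha_i i_*/N_i$. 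Using the consequent identity $\alpha_i i_*^2/N_i = \alpha_i i_* - \gamma$, a direct computation yields
\[
\mathrm{tr}(A) = -\beta i_* - \alpha_i i_*/N_i, \qquad \det(A) = -\beta\alpha_i i_*(1-2i_*/N_i),
\]
where the apparent $\beta^2$-term in $\det(A)$ cancels between the diagonal product and the anti-diagonal.

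Second, since both invariants are linear in $\beta$, Vieta's formulas force exactly one of $\lambda_{3,4}$ to diverge as $\beta\to\infty$. The sum of the roots equals the trace, which is $\Theta(\beta)$ and negative, so the diverging root must scale as $\lambda_- \sim -\beta i_* \to -\infty$. For the bounded root I would divide the characteristic polynomial $\lambda^2 - \mathrm{tr}(A)\lambda + \det(A) = 0$ by $\beta$ and pass to the singular limit $\beta\to\infty$, reducing the quadratic to the linear equation $i_*\lambda + \alpha_i i_*(1-2i_*/N_i) = 0$ with unique root $\lambda_+ \to \alpha_i(1 - 2i_*/N_i)$. Substituting $i_* = i_*^\pm$ from Lemma~\ref{lem:physss} gives the explicit limit $\lambda_+ = \mp 2\alpha_i N_i^{-1}\sqrt{N_i^2/4 - \gamma N_i/\alpha_i}$, which vanishes \emph{precisely} at the fold condition $N_i = 4\gamma/\alpha_i$ of Corollary~\ref{cor:transcritical}, i.e.\ when the two branches $i_*^\pm$ coalesce at $N_i/2$.

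The main obstacle is fitting the bounded root into the literal set $\{0,-\infty\}$: for generic parameter values $\lambda_+$ is a finite, nonzero $O(1)$ constant, and only at the fold bifurcation does it truly equal $0$. I would therefore present the lemma as realised at the fold point (where it holds in the strict sense), and for the off-fold regime justify the claim on the rescaled spectral scale $\lambda/\beta$: here $\lambda_+/\beta \to 0$ while $\lambda_-/\beta \to -i_* < 0$, so after un-rescaling one obtains $\{0,-\infty\}$ exactly. Making either reading rigorous requires no new tool beyond the quadratic formula and standard matrix perturbation theory for the $O(1/\beta)$ correction; everything else reduces to the algebraic bookkeeping sketched above.
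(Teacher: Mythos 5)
Your reduction to the upper-left $2\times 2$ block is exactly the paper's starting point --- its entire proof is the single sentence ``This is immediate from the upper left two-by-two block of the Jacobian'' --- but you carry the computation much further than the paper does, and what you find is worth stating plainly: your asymptotics are correct, and they show that the lemma as literally written does not hold for generic parameters. Using $\beta r_*=\gamma/i_*$ and the steady-state identity to collapse the $(2,2)$ entry to $-\alpha_i i_*/N_i$, both invariants are indeed linear in $\beta$; one eigenvalue diverges like $-\beta i_*\ra -\I$ (since $i_*^\pm>0$), while the other converges to $\alpha_i(1-2i_*/N_i)=\mp\frac{2\alpha_i}{N_i}\sqrt{N_i^2/4-\gamma N_i/\alpha_i}$, which vanishes only at the fold $N_i=4\gamma/\alpha_i$. (There is a sign slip in your limiting linear equation --- it should read $i_*\lambda-\alpha_i i_*(1-2i_*/N_i)=0$ --- but the root you then state is the correct one.) The paper's one-liner presumably reads off the eigenvalues $0$ and $-\beta(i_*+r_*)$ of the rank-one leading-order part of the block and silently drops the $O(1)$ correction to the zero eigenvalue; your two proposed readings (exactness at the fold, or limits on the rescaled spectral scale $\lambda/\beta$) are the honest ways to make the statement true, and either one suffices for the paper's downstream use of the lemma, which only needs that one eigenvalue stays bounded while the other tends to $-\I$ at rate $\beta$.
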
 

\begin{proof}
This is immediate from the upper left two-by-two block of the Jacobian.
\end{proof}

Therefore, at least the local attraction speed to the cancer-free equilibrium $c_*=0$ depends crucially on the size of $\beta$. A more global convergence analysis and/or bifurcation analysis could be carried out but we shall postpone this to future work as our main goal here is just to focus on the key conclusions from the \textit{brisc} model. In summary, we see that the actual input of $r$ components via increasing $\gamma$ makes it principally possible to use adaptive capabilities to remove cancer cells $c$. We have seen that the conversion rate $\beta$ is crucial in the step of convergence to the steady state and large transient peaks can appear if $\beta$ is too small. Therefore, we have think about a microscopic model of adaptation to understand, how one might be able to increase the effectiveness of self-adapting mechanisms. Increasing $\gamma$ is externally controlled, so the best we can hope for is that eventually we will be able to produce sufficiently many $r$ components. Yet, we can try to understand better, how we might be able to increase the conversion rate $\beta$.

\section{A Microscopic Generic Model for Adaptivity in Cancer Dynamics}
\label{sec:prob}

From Section~\ref{sec:ODE}, we have already learned that it would be incredibly useful to understand the transformation between $i$ components and $s$ components via a discrete-time and discrete-space model. \textcolor{black}{From a biological viewpoint, one motivation is to consider the immune system and its competition with cancer evolution~\cite{TholPawlikMcGranahan}. It is well-understood that cancer cells have to be able to evade the immune response to facilitate tumor growth. Yet, there are biologically well-motivated conjectures~\cite{Baxevanisetal} that improved immunotheraphy can help us to target cancer cells. On an abstract mathematical level, this is precisely an example of a process improving the targeted abilities of $i$ components to $s$ components. The reason, why we have decided to use an added layer of abstraction is that the body response to cancer dynamics is definitely not limited to just immune cells. For example, the hormone system may also play a crucial role in many cancers~\cite{Modugnoetal,RastelliCrispino}. Its regulating cells could equally well be targets for adaptation, i.e., promoting them from $i$ components to $s$ components. Very similar remarks also apply to recent discoveries regarding neurological control systems in cancer progression~\cite{Khanmammadovaetal}. Since we are still far away from a detailed understanding and interplay of all these biological systems and their precise role in cancer dynamics, it seems worthwhile to abstract the main possibility, i.e., that there are potential switches that can help the body to respond to a cancer.} We just want to capture the initial phase when $s$ components can be created, so we fix a finite final time $T\in \N$. Furthermore, we consider as a phase space of the dynamical system the space of matrices $A=(a_{jk})\in\R^{M\times N}$ with entries $a_{jk}\in \{0,1\}$ for all $j,k\in\N$. Every column abstractly represent one $i$ component that can potentially transform into an $s$ component. The $M$ rows of the matrix $A$ represent different attributes where an entry $1$ means we have satisfied/obtained the attribute. A full vector of ones, i.e., if $(a_{jk})_{j=1}^N=(1,1,\ldots,1)^\top$, means that the component $i$ has been transformed into a component $s$. The dynamics is then defined as follows:

\begin{itemize}
 \item[(S0)] We fix $M,N\in\N$, and $T\in\N$ with $\N=\{1,2,\ldots\}$. At time $t=0$, we initialize $A=0\in\R^{M\times N}$, i.e., we start with the zero matrix consisting only of ``pure'' $i$ components.
 \item[(S1)] At each time step $t\in\N$ with $t<T$, we carry out three possible dynamical transitions: 
 \item[(S1a)] we turn a uniformly \textcolor{black}{at random chosen row} of $A$ into all ones;
 \item[(S1b)] with probability $p_d\in(0,1)$, we turn a uniformly at random chosen column into zeros;
 \item[(S1c)] with probability $p_m\in(0,1)$, each zero entry can turn into a one entry;
 \item[(S2)] we increase the time from $t$ to $t+1$ and return to (S1).
\end{itemize} 

The dynamical rules are relatively easy to interpret. The rule (S1a) abstracts our idea of the reservoir variable from Section~\ref{sec:ODE} since turning an entire attribute for all cells into ones means that the $r$ components have triggered an improved situation to convert from $i$ to $s$. \textcolor{black}{An alternative and equivalent description of (S1a) would be that with a probability of $1/M$ a row is selected and all its entries are set to $1$.} We also refer to (S1a) as a process adding information (PAI). Rule (S1b) just models cell/component death as well as the loss of memory. Note carefully that the columns of $A$ essentially have a memory function collecting all the possible triggers we have given the self-adapting system that present features that are similar to the features of the cancer cells. Rule (S1c) accounts for positive random events (PREs) as current zeros might be converted to ones at a very low probability, i.e., we want to pick $p_m$ very small. This models the fact that self-adapting complex systems must have some function to try out different new/creative solutions to a presented problem and some must have positive effects, which are the ones that are kept as self-adapting systems also must have the ability to test such ``mutation'' trials. Note carefully that the time scale might still be very fast for this process as here PREs could just mean tuning the parameters of an existing chemical or hormonal subsystem of the organism to achieve a positive effect. \textcolor{black}{We also note that (S1a) and (S1c) both have positive effects but model very different biological processes, i.e., an external deterministic one in comparison to a rare probabilistic intrinsic one.} 

\begin{figure}[htbp]
	\centering
	\begin{overpic}[width=1.0\textwidth]{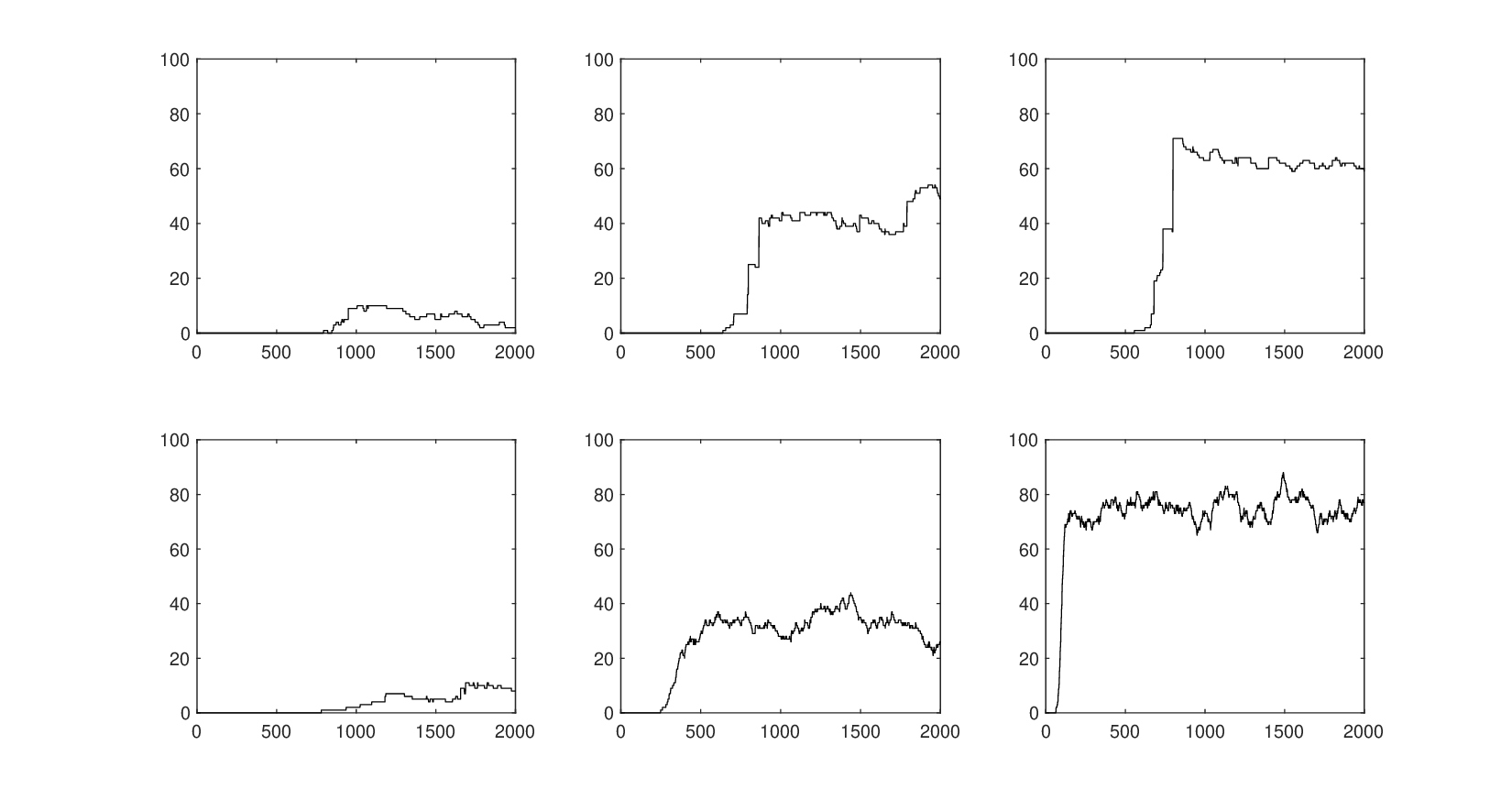}
		\put(8,15){$u$}
		\put(37,15){$u$}
		\put(66,15){$u$}
		\put(8,39){$u$}
		\put(37,39){$u$}
		\put(66,39){$u$}
		\put(30,2){$t$}
		\put(55,2){$t$}
		\put(85,2){$t$}
		\put(30,27){$t$}
		\put(55,27){$t$}
		\put(85,27){$t$}
		\put(20,47){(a)}
		\put(49,47){(b)}
		\put(80,47){(c)}
		\put(20,22){(d)}
		\put(49,22){(e)}
		\put(80,22){(f)}
	\end{overpic}
	\caption{\label{fig:02}Simulation of sample paths \textcolor{black}{for} the model (S0)-(S2) with $M=200$, $N=100$, $T=2000$. (a)-(c) were computed with fixed $p_m=0.001$ and for $p_d=0.3,0.1, 0.05$. The parts (d)-(f) were computed with fixed $p_d=0.3$ and $p_m=0.001,0.01,0.05$. On the vertical axis we show $u$ as defined in~\eqref{eq:micobs}. \textcolor{black}{We observe the emergence of a transition as the probabilities $p_d=0.3$ and $p_m$ are increased.}}
\end{figure}

With these three rules, we may now simulate the model and ask, which key effects arise that could help us to understand how to increase the parameter $\beta$ from Section~\ref{sec:ODE}? Indeed, the size of $\beta$ can be seen as an aggregate outcome of the microscopic model we have just defined as more columns with all ones in the matrix $A$ at time $T$ would just correspond to a higher $\beta$. Figure~\ref{fig:02} shows a direct simulation of our microscopic model, where we selected as an observable 
\be
\label{eq:micobs}
\textcolor{black}{u=u(t):=|\{k:(a_{jk})_{j=1}^M=(1,1,\ldots,1)^\top\}|,}
\ee 
which just counts the current number of \textcolor{black}{columns} at time $t$ that are all ones, i.e., have been turned from $i$ into $s$. As one would expect from the biological interpretation of the model, the value of $u$ is generally higher, when the death rate $p_d$ decreases and the positive mutation rate $p_m$ increases. Yet, Figure~\ref{fig:02} also shows that even if $u$ reaches a high value, which means that most columns have been positively transformed, then this transition occurs rather abruptly from low values to high values of $u$; see Figure~\ref{fig:02}(b)-(c) and (e)-(f). Clearly, such a transition reminds us of classical \textcolor{black}{bifurcation-type or (phase)} transitions in other microscopic probabilistic models~\cite{Kesten,Kuramoto,Liggett}. Generally it is not easy to prove the existence of such transitions analytically, so we shall not pursue an analytical study of the model but rather focus on numerical simulations.   

\begin{figure}[htbp]
	\centering
	\begin{overpic}[width=0.6\textwidth]{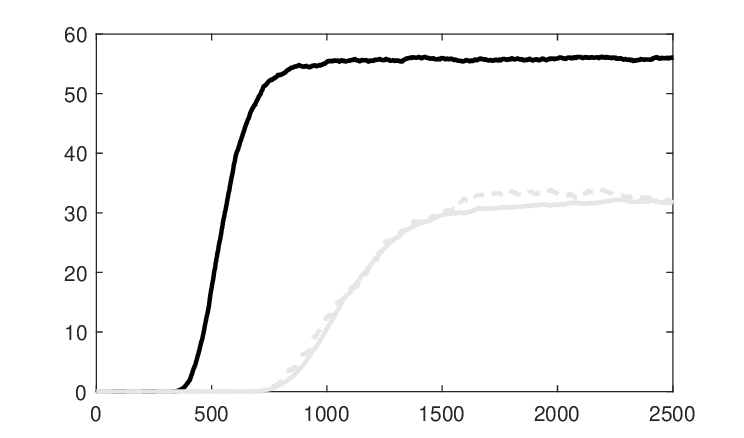}
		\put(4,35){$u$}
		\put(82,0){$t$}
		\end{overpic}
	\caption{\label{fig:03}Simulation averaged over 100 sample paths the model (S0)-(S2) with $M=200$, $N=100$, $T=2500$. The solid black curve was computed for $p_d=0.1$, $p_m=0.005$ and with PAI at each time step turned on. The solid gray curve was computed with the same parameter values but with PAI turned off. The dashed \textcolor{black}{curve} was computed for $p_d=0.1$, $p_m=0$ and with PAI turned on. \textcolor{black}{We observe that the combination of memory and positive mutation working in tandem are key to achieve a quick and early-in-time transition point.}}
\end{figure}

In this regard, a rather remarkable feature of the model is that memory and positive mutation working in tandem are key to achieve a quick and early-in-time transition of $u$ to high values. Figure~\ref{fig:03} illustrates this effect. If we do not have a PAI mechanism that slowly adds ones at each time step in each row and this information is kept for some time, then the transition happens much later and also saturates at a lower value of $u$. Similarly, if we do not have PREs, a very similar outcome is observed in comparison to the combination of both effects. 

\begin{figure}[htbp]
	\centering
	\begin{overpic}[width=0.6\textwidth]{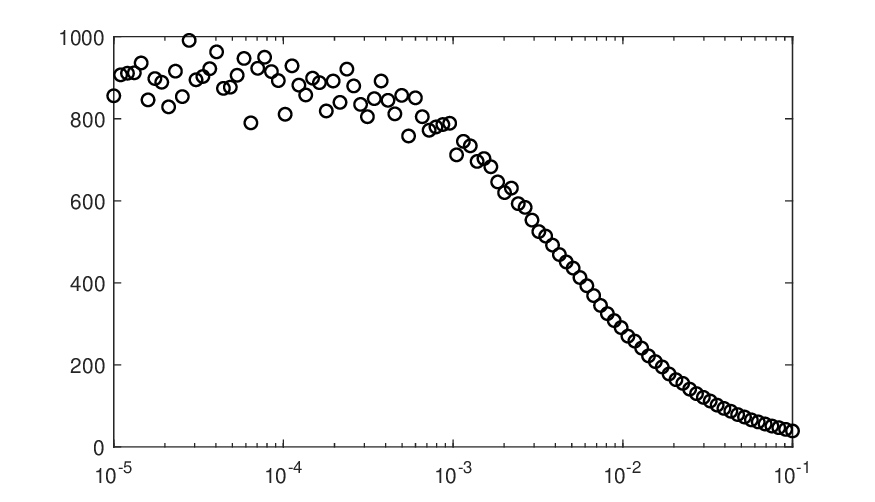}
		\put(3,30){$t_*$}
		\put(80,0){$p_m$}
		\end{overpic}
	\caption{\label{fig:04}Simulation averaged over 30 sample paths per run for the model (S0)-(S2) with $M=200$, $N=100$, $T=2000$ and $p_d=0.1$. We plot the location of the transition point $t_*$ in time depending on the good mutation probability $p_m$. The scale is semi-logarithmic with a log-scale on the horizontal axis.\textcolor{black}{It is interesting to observe that there seems to be a critical threshold above which the good mutation probability $p_m$ starts to induce a significant shortening of the time to the transition.}}
\end{figure}

We can also study the location of the transition point $t_*$ in time as we increase probability $p_m$ at which PREs happen. Figure~\ref{fig:04} shows this dependence, where we computed the transition time/point $t_*$ from direct simulations; effectively the figure shows a relatively tight upper bound to $t_*$ as we determined $t_*$ has occurred once $u$ reaches a level of ten percent of column vectors that are all ones. We see that $t_*$ has a first phase where it does not significantly change for low values of $p_m$. This is quite natural as this just means the PRE effect (S1c) contributes too little to the transition, which is then primarily driven by the PAI effect (S1a). Yet, there is a broad range of values between $10^{-3}$ and $10^{-1}$ visible in Figure~\ref{fig:04}, where the transition does occur much earlier due to the increased mutation rate $p_m$. Finally, the situation eventually saturates as the transition point $t_*$ moves closer to $t=0$ as $p_m$ is increased even further. 

\begin{figure}[htbp]
	\centering
	\begin{overpic}[width=0.6\textwidth]{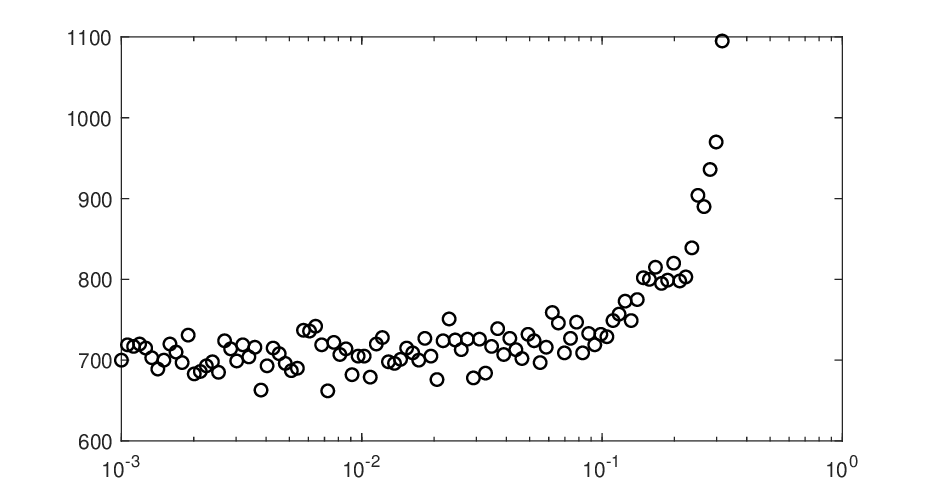}
		\put(3,30){$t_*$}
		\put(80,0){$p_d$}
		\end{overpic}
	\caption{\label{fig:05}Simulation averaged over 30 sample paths per run for the model (S0)-(S2) with $M=200$, $N=100$, $T=2000$ and $p_m=0.001$. We plot the location of the transition point $t_*$ in time depending on the death probability $p_d$. The scale is semi-logarithmic with a log-scale on the horizontal axis. \textcolor{black}{Analogous to Figure~\ref{fig:04}, it is interesting to observe that there seems to be a critical threshold above which the probability $p_d$ starts to induce a significant extension of the time to the transition.}}
\end{figure}

In addition to the probability $p_m$, we can also consider the probability $p_d$ as a parameter. Figure~\ref{fig:05} shows the transition point $t_*$ based upon the variation of $p_d$ determined from simulations with the same basic settings as in Figure~\ref{fig:04}. We see in Figure~\ref{fig:05} that over a wide range of parameters, the transition $t_*$ is insensitive to the probability $p_d$. Only if $p_d$ is very high around $p_d=0.5$, the value of $t_*$ rises. From a medical standpoint, this is a good result as it means we can focus on increasing the external control parameters but do not really have to worry much about the death of components in the initial microscopic phase.

\section{Summary and Outlook}
\label{sec:summary}

Despite their quite extreme simplicity, our two models for cancer dynamics and organism response did provide surprisingly broad insights into the challenges of enhancing/using self-adaptation. Of course, another argument would be that we should not be surprised by this as for complex systems such as epidemic dynamics, climate dynamics or opinion formation very simple mathematical models have already provided remarkable insights. It seems that this shift of perspective could also be useful to combat cancer. Indeed, we basically have no hope to build a faithful full organism model for humans as the systems biology is far too complicated. Hence, the best we can hope for are reduced partial models. From these models, it might be possible to understand the most important features of large-scale whole body dynamics, which might be very relevant to treat cancer, or even many other dangerous medical conditions. Investigating this is far beyond the initial mathematical modeling we have discussed in this work. However, it provides an interesting shift of perspective as instead of specializing and trying to isolate subsystems of an organism, we allow instead for more coupling/interaction between subsystems at the expense fine-grained model details. 

The key technical results we have obtained are the crucial time scales between adaptation and cancer progression in our models. In particular, we have identified in the macroscopic model \textcolor{black}{bifurcations} that led to a cancer-free steady state. Although this state is attracting for many initial conditions, we have also shown that large transient spikes can occur in cancer cell density, which requires further speed-up for adaptation. We have studied this speed-up using a microscopic probabilistic model and identified a \textcolor{black}{quite sharp and localized} transition to successful conversion to components that combat cancer. Yet, this transition does shift its location based on parameters and could potentially be difficult to find and control in practice. \textcolor{black}{Yet, coming back to our initial biological motivation, we have at least suggested one plausible explanation using our two models, why certain cancer therapies that aim to exploit body adaptation mechanisms currently may fail frequently: the time scale to the transition point $t_*$ could potentially be very long so that initial therapy shows no effect initially or even on some longer time scales. Therefore, one might abandon such approaches in practice, i.e., the potentially helpful bifurcations and transitions are only visible on longer time scales.}  

\textcolor{black}{Several technical steps could certainly now be carried out from a mathematical perspective. A much more detailed bifurcation analysis of the \textit{brisc} ODE model is certainly possible, e.g., trying evem harder to find bifurcations of higher co-dimension as organizing centers for more complex dynamics. Although Figure~\ref{fig:06}(b) indicates that higher-codimension bifurcations might not be particularly interesting for the main biological mechanisms in our model. Also a more global convergence-speed analysis to steady state could be a reachable goal. Even more ambitiously, one could aim for a precise description of finite-time dynamics and transients, e.g., using finite-time Lyapunov exponents (FTLEs)~\cite{DoanKarraschNguyenSiegmund,LaiTel}. Interestingly, although classical Lyapunov exponents are very commonly used in cancer modelling~\cite{GallasGallasGallas,KhajanchiPercGhosh,PosadasCrileyCoffey}, FTLEs are only computed very rarely at this point for cancer dynamics~\cite{Leeetal}. We have shown that the finite time scale is crucial, so FTLEs are a natural direction. Of course, there is always the possibility to extend the models. It is well-understood that for biological systems spatial structure, stochastic noise, communication delays, heterogeneous coupling, and many other additional modelling aspects can be highly relevant. In this work, we have tried to keep these structures at the minimum to extract the main dynamical features first.} Similar remarks apply to investigating generalizations \textcolor{black}{of the reaction terms as discussed in the more general \textit{risc} ODE models.} For the microscopic probabilistic model, it seems plausible that it could eventually be analyzed using classical \textcolor{black}{probabilistic techniques}. It might then also be possible to study the \textcolor{black}{transition at $t_*$} in more detail analytically, e.g., regarding the existence of invariant measures\textcolor{black}{ and scaling laws for the transition point. This rigorous mathematical analysis of the microscopic model is currently in progress but will be published separately as it turns out to be surprisingly lengthy and mathematically involved. In summary, although further analysis steps are definitely worth pursuing, we are going to postpone the microscopic model analysis to keep the main modeling ideas and results in a succinct form.}


\appendix

\subsection*{Conflict of Interest}

The author declares that he has no conflicts of interest. 

\subsection*{Data Availability}

Not applicable as no data was used in this study.

\end{document}